\documentclass{article}

\usepackage[utf8]{inputenc}
\usepackage[OT1]{fontenc}
\usepackage[ocgcolorlinks,colorlinks=true,urlcolor=blue]{hyperref}
\usepackage[usenames,dvipsnames]{color}

\usepackage[a4paper,twoside]{geometry}
\usepackage{RR}

\usepackage[usenames,dvipsnames]{color}

\usepackage{amsmath,amssymb,enumerate}

\usepackage{amsmath,amssymb,amsthm,mathabx}
\newtheorem{theorem}{Theorem}
\newtheorem{lemma}[theorem]{Lemma}

\usepackage{bbold}
\usepackage{stmaryrd}

\newcommand{\R}{\mathbb{R}}
\newcommand{\N}{\mathbb{N}}
\newcommand{\pr}{\mathcal{P}}

\newcommand{\pth}[1]{\left( #1 \right)}
\newcommand{\Sp}{S\setminus\{q\}}



\def\cI{\mathcal{I}}

\RRNo{8154}
\RRdate{December 2012}
\RRauthor{

Olivier Devillers\thanks{Projet Geometrica, INRIA Sophia Antipolis - Méditerranée}
\and Marc Glisse\thanks{Projet Geometrica, INRIA Saclay - Île de France}
\and Xavier Goaoc\thanks{Projet Vegas, INRIA Nancy - Grand Est}
\and \\Guillaume Moroz\thanks{Projet Vegas, INRIA Nancy - Grand Est}
\and Matthias Reitzner\thanks{Institut f\"ur Mathematik, Universit\"at Osnabr\"uck, 49069 Osnabr\"uck, Germany.}
}
\authorhead{O. Devillers, M. Glisse, X. Goaoc, G. Moroz, \& M. Reitzner}
\RRtitle{Monotonie des  $f$-vecteurs de polytopes aléatoires}
\RRetitle{The monotonicity of $f$-vectors of random  polytopes}
\titlehead{The monotonicity of $f$-vectors of random  polytopes}
\RRnote{Part of this work is supported by: ANR blanc PRESAGE (ANR-11-BS02-003)}
\RRresume{
  Soit  $K$ un domaine convexe et compact de $\R^d$,  $K_n$ 
  l'enveloppe convexe de $n$ points choisis uniformément et
  indépendamment dans $K$, et
  $f_{i}(K_n)$ le nombre de faces de $K_n$ de dimension $i$.

Nous montrons que pour des convexes du plan, $E[f_0 (K_n)]$
est croissant avec $n$.  En dimension $d \geq 3$ nous montrons que si
$\lim_{n \to \infty}  \frac{E[f_{d-1}(K_n)]}{An^c}=1$ pour des constantes $A$ et $c>0$
alors la fonction $n \mapsto E[f_{d-1}(K_n)]$ est
croissante pour $n$ suffisamment grand. En particulier, le nombre de 
facettes de l'enveloppe convexe de $n$ points aléatoires distribués
uniformément et  indépendamment dans un convexe compact à bord lisse
est asymptotiquement croissant. Notre démonstration utilise un
argument d'échantillonnage aléatoire.
}
\RRabstract{
  Let $K$ be a compact convex body in $\R^d$, let $K_n$ be the convex
  hull of $n$ points chosen uniformly and independently in $K$, and let
  $f_{i}(K_n)$ denote the number of $i$-dimensional faces of $K_n$.

  We show that for planar convex sets, $E[f_0 (K_n)]$ is increasing in
  $n$.  In dimension $d \geq 3$ we prove that if $\lim_{n \to \infty}
  \frac{E[f_{d-1}(K_n)]}{An^c}=1$ for some constants $A$ and $c>0$
  then the function $n \mapsto E[f_{d-1}(K_n)]$ is increasing for $n$
  large enough. In particular, the number of facets of the convex hull
  of $n$ random points distributed uniformly and independently in a
  smooth compact convex body is asymptotically increasing. Our proof
  relies on a \emph{random sampling} argument.
}
\RRmotcle{Géométrie algorithmique, Géométrie stochastique, Enveloppe
  convexe, Complexité.}
\RRkeyword{Computational geometry, Stochastic geometry, Convex hull, Complexity .}
\RRprojets{Geometrica and Vegas}
 \RCSaclay 
 \RCNancy 
\RCSophia 

\begin{document}
\makeRR  

\section{Introduction}

What does a random polytope, that is, the convex hull of a finite set
of random points in $\R^d$, look like? This question goes back to
Sylvester's \emph{four point problem}, which asked for the probability
that four points chosen at random be in convex position.  There are,
of course, many ways to distribute points randomly, and as with
Sylvester's problem, the choice of the distribution drastically
influences the answer.

\paragraph{Random polytopes.}

In this paper, we consider a \emph{random polytope} $K_n$ obtained as
the convex hull of $n$ points distributed uniformly and independently
in a \emph{convex body} $K \subset \R^d$, i.e. a compact convex set with nonempty interior. This model arises
naturally in applications areas such as computational geometry~\cite{Ed, PS}, convex geometry and stochastic geometry~\cite{SchnWe3} or functional analysis~\cite{DGG, MilPa}. A natural question is to understand the behavior
of the {\boldmath $f$}-\emph{vector} of $K_n$, that is, {\boldmath $f$}$(K_n)=(f_0(K_n), \ldots,f_{d-1}(K_n))$ 
where $f_i(K_n)$ counts the number of
$i$-dimensional faces, and the behaviour of the \emph{volume} $V(K_n)$ . Bounding $f_i(K_n)$ is related, for example, to
the analysis of the computational complexity of algorithms in
computational geometry.

Exact formulas for the expectation of {\boldmath $f$}$(K_n)$ and $V(K_n)$ are only 
known for convex polygons ~\cite[Theorem~5]{Buchta-Reitzner}. In
general dimensions, however, the asymptotic behavior, as $n$ goes to infinity,
is well understood; the general picture that emerges is a dichotomy
between the case where $K$ is a polytope when
\begin{eqnarray}\label{eq:polytope}
  E[f_i(K_n)] &=& c_{d, i, K} \log^{d-1}n + o(\log^{d-1}n),
  \\ \nonumber
  E[V(K)-V(K_n)] &=& c_{d, K} n^{-1} \log^{d-1}n + o(n^{-1}\log^{d-1}n),
\end{eqnarray}
and the case where $K$ is a smooth convex body (i.e. with boundary of differentiability class $C^2$ and with positive Gaussian curvature) when 
\begin{eqnarray}\label{eq:smooth}
  E[f_i(K_n)] &=& c'_{d, i, K} n^{\frac{d-1}{d+1}} + o(n^\frac{d-1}{d+1}),
  \\ \nonumber
  E[V(K)-V(K_n)] &=& c_{d, K} n^{-\frac{2}{d+1}} + o(n^{-\frac{2}{d+1}}).
\end{eqnarray}
(Here $c_{d, i, K}$ and $ c'_{d, i, K}$ are constants depending only
on $i$, $d$ and certain geometric functionals of $K$.)
The results for $E [V(K_n)]$ follow from the corresponding results for $f_0(K_n)$ via Efron's formula~\cite{Efron}.
The literature devoted to such estimates
is consequent and we refer to Reitzner~\cite{Reitzner-survey} for a
recent survey.

\paragraph{Monotonicity of $f$-vectors.}

In spite of numerous papers devoted to the study of random polytopes,
the natural question whether these functionals are monotone remained
open in general.

Concerning the monotonicity of $E [V(K_n)]$ with respect to $n$, the
elementary inequality

\begin{equation}\label{eq:Vmon}
 E[ V (K_{n})] \leq E [V( K_{n+1}) ]
\end{equation}

\noindent
for any fixed $K$ follows immediately from the monotonicity of the
volume.  Yet the monotonicity (at first sight similarly obvious) with
respect to $K$, i.e. the inequality that $K \subset L$ implies

$$ E [V (K_{n})] \leq E[ V( L_{n})] $$

\noindent
surprisingly turned out to be false in general. This problem was posed
by Meckes \cite{MeckesAMS}, see also \cite{Reitzner-survey}, and a
counterexample for $n=d+1$ was recently given by Rademacher
\cite{Rademacher}.

A \lq\lq tantalizing problem\rq\rq\ in stochastic geometry, to quote
Vu~\cite[Section~8]{Vu}, is whether $f_0(K_n)$ is monotone in $n$, 
that is, whether similar to Equation~\eqref{eq:Vmon}:

$$ E [f_0 (K_n)] \leq E [f_0(K_{n+1})] . $$

\noindent
This is not a trivial question as the convex hull of $K_n \cup \{x\}$
has fewer vertices than $K_n$ if $x$ lies outside $K_n$ and sees more
than two of its vertices. Some bounds are known for the expected
number of points of $K_n$ seen by a random point $x$ chosen uniformly
in $K \setminus K_n$~\cite[Corollary~8.3]{Vu} but they do not suffice
to prove that $E[f_0(K_n)]$ is monotone.

It is known that $E[f_0(K_n)]$ is always bounded from below by an
increasing function of $n$, namely $c(d)\log^{d-1}n$ where $c(d)$
depends only on the dimension: this follows, via Efron's
formula~\cite{Efron}, from a similar lower bound on the expected
volume of $K_n$ due to B\'ar\'any and
Larman~\cite[Theorem~2]{Barany-Larman}. While this is encouraging, it
does not exclude the possibility of small oscillations preventing
monotonicity. In fact, B\'ar\'any and
Larman~\cite[Theorem~5]{Barany-Larman} also showed that for any
functions $s$ and $\ell$ such that $\lim_{n \to \infty}s(n)=0$ and
$\lim_{n \to \infty}\ell(n)=\infty$ there exists\footnote{More
  precisely, \emph{most} (in the Baire sense) compact convex sets
  exhibit this behavior.} a compact convex domain $K \subset \R^d$ and
a sequence $(n_i)_{i \in \N}$ such that for all $i\in \N$:

\[ E[f_0(K_{n_{2i}})] > s(n_{2i}) {n_{2i}}^{\frac{d-1}{d+1}} \quad
\hbox{and} 
\quad E[f_0(K_{n_{2i+1}})] <
\ell(n_{2i+1})\log^{d-1}(n_{2i+1}).\]

\noindent
When general convex sets are considered, there may thus be more to
this monotonicity question than meets the eye.

\paragraph{Results.}

This paper present two contributions on the monotonicity of the {\boldmath $f$}-vector of $K_n$.
First, we show that for any \emph{planar} convex body $K$
the expectation of $f_0(K_n)$ is an increasing function of $n$. This
is based on an explicit representation of the expectation $E [f_0(K_n)]$.
\begin{theorem}
\label{theorem:monocity} 
Assume $K$ is a planar convex body. For all integers $n$,  

$$E[f_i(K_{n+1})]>E[f_i(K_n)] ,\ i=0,1 . $$
\end{theorem}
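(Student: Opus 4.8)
The plan is to prove the planar case through an explicit formula for $E[f_0(K_n)]$ and then leverage the two-dimensional duality between vertices and edges. Since a convex polygon in the plane has equally many vertices and edges, $f_0(K_n)=f_1(K_n)$ always holds for the convex hull of points in general position, so it suffices to establish the monotonicity claim for $i=0$; the $i=1$ statement follows immediately. Thus I would concentrate entirely on showing $E[f_0(K_{n+1})] > E[f_0(K_n)]$.

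First I would set up an integral expression for $E[f_0(K_n)]$ in terms of the probability that a given point becomes a vertex. A point $p_j$ among the $n$ sampled points is a vertex of $K_n$ precisely when it does not lie in the convex hull of the other $n-1$ points. By symmetry and linearity of expectation, $E[f_0(K_n)] = n \cdot \Pr[p_1 \notin \mathrm{conv}(p_2,\dots,p_n)]$. Writing this out as an integral over $K^n$ with respect to the uniform density, I would obtain
\begin{equation}
E[f_0(K_n)] = \frac{n}{\vol(K)^n} \int_K \vol(K)^{n-1} \bigl(1 - q_{n-1}(x)\bigr)\,\d x,
\end{equation}
where $q_{n-1}(x)$ denotes the probability that $n-1$ further uniform points all fail to ``cover'' $x$ in the sense of having $x$ in their convex hull. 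The natural reformulation, following the combinatorial approach used in the exact planar formulas of Buchta--Reitzner cited in the excerpt, is to express $E[f_0(K_n)]$ as an alternating or weighted sum over the probabilities that small subsets of points form specific configurations, so that the increment $E[f_0(K_{n+1})] - E[f_0(K_n)]$ becomes a single clean integral.

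The key step is to compute this increment and show it is strictly positive. I would write $E[f_0(K_{n+1})] - E[f_0(K_n)]$ using the explicit representation and aim to reduce it to an integral of a manifestly nonnegative quantity. The geometric idea is that adding one more point can only destroy a vertex when the new point $x$ lies outside $K_n$ and sees a chain of at least two consecutive boundary vertices, but in the plane adding a point outside also always creates exactly two new vertices (the two tangency points from $x$) against at most the number it shadows. Quantifying the expected net gain requires comparing, in expectation, the number of created versus destroyed vertices; the explicit formula lets me bypass this case analysis and instead exhibit the difference as an integral of a positive kernel against the uniform density. Strict positivity then follows because the integrand is positive on a set of positive measure for any body $K$ with nonempty interior.

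The main obstacle I anticipate is obtaining the explicit representation of $E[f_0(K_n)]$ in a form where the increment telescopes or simplifies to an obviously positive integral. In the plane this should be tractable because vertices correspond to points on the convex hull boundary and the relevant probabilities reduce to one-dimensional integrals of chord-length or cap-area functions; the higher-dimensional difficulty (no closed form, oscillations as in the B\'ar\'any--Larman construction) does not arise here. The delicate point will be handling the boundary terms and ensuring the positivity is \emph{strict} rather than merely nonnegative, which I expect to control by identifying an explicit region of $K$ contributing a positive amount to the increment.
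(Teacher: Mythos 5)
There is a genuine gap: your proposal is a plan whose central step is never supplied. You correctly reduce $i=1$ to $i=0$ via $f_0=f_1$ in the plane (as the paper does), and you correctly guess that the proof should go through an explicit representation of $E[f_0(K_n)]$ whose increment in $n$ is sign-definite. But you never identify that representation, and the one concrete starting point you do offer --- $E[f_0(K_n)] = n\Pr[p_1 \notin \mathrm{conv}(p_2,\dots,p_n)]$, i.e.\ Efron's identity --- only reformulates the problem: the increment becomes $1 - \frac{1}{V(K)}\bigl((n+1)E[V(K_n)] - nE[V(K_{n-1})]\bigr)$, and showing this is positive is exactly as hard as the original question. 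Your geometric heuristic is also miscounted: adding an exterior point $x$ creates exactly \emph{one} new vertex ($x$ itself; the two tangency points were already vertices), and destroys the vertices strictly visible from $x$ between the tangency points, so the net expected change is $\Pr[x \notin K_n]$ minus an expected "swallowed vertex" count --- precisely the quantity the introduction notes cannot currently be controlled directly (the Vu bound cited there does not suffice). Nothing in your proposal explains why that comparison comes out positive, nor where the convexity of $K$ would enter.

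The missing ideas, which constitute essentially the whole proof in the paper, are: (i) the Buchta--Reitzner formula $E[f_0(K_n)] = \frac16 n(n-1)\int_{\mathbb{S}^1}\int_0^1 s^{n-2}L(s,u)\,\d s\,\d u$, where $L(s,u)$ is the \emph{squared} length of the chord orthogonal to $u$ cutting off area $s$; (ii) the substitution $s=t^{1/n}$ and an integration by parts turning this into $-\frac16\int_{\mathbb{S}^1}\int_0^1 L'(t^{1/n},u)\,\d t\,\d u$; and (iii) the lemma that $s\mapsto L'(s,u)$ is decreasing, which is where convexity is used (the chord length $l(p)$ is concave in the offset $p$, and $L'(s)=2\,\frac{\d}{\d p}l(p)\big|_{p=p(s)}$). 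Monotonicity in $n$ then falls out of $t^{1/n}<t^{1/(n+1)}$ for $t\in(0,1)$. Without step (i) as the concrete formula and step (iii) as the mechanism by which convexity forces the sign, your outline cannot be completed as written.
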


Second, we show that if $K$ is a compact convex set in $\R^d$ with a
$C^2$ boundary then the expectation of $f_{d-1}(K_n)$ is increasing
for $n$ large enough (where \lq\lq large enough\rq\rq depends on $K$);
in particular, for smooth compact convex bodies $K$ in $\R^d$ the
expectation of $f_0(K_n)$ becomes monotone in $n$ for $n$ large
enough.

\begin{theorem}\label{th:smooth}
  Assume $K \subset \R^d$ is a smooth convex body. Then there is an
  integer $n_K $ such that for all $n \geq n_K$

$$E[f_{i}(K_{n+1})]>E[f_{i}(K_n)] ,\  i=d-2, d-1. $$
\end{theorem}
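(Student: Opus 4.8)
The plan is to turn the first-order asymptotics \eqref{eq:smooth} into a strict inequality by combining an exact \emph{resampling identity} (obtained by deleting one sample point at random) with a Tauberian analysis; only the exponent $c=\tfrac{d-1}{d+1}>0$ of \eqref{eq:smooth} will be used, which is why the statement is really a corollary of a purely analytic criterion. Two reductions come first. Write $\phi(n)=E[f_{d-1}(K_n)]$. Since $K_n$ is almost surely simplicial, each facet is a $(d-1)$-simplex with $d$ ridges and each ridge lies in exactly two facets, so $f_{d-2}(K_n)=\tfrac{d}{2}f_{d-1}(K_n)$ and $E[f_{d-2}(K_n)]=\tfrac{d}{2}\phi(n)$; hence the case $i=d-2$ follows at once from $i=d-1$, and it suffices to treat $\phi$.

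Next I would set up the resampling identity. Let $X_1,\dots,X_{n+1}$ be i.i.d.\ uniform in $K$, put $P=\{X_1,\dots,X_{n+1}\}$, and delete a uniformly chosen $X_j$. As a uniform $n$-subset of $n+1$ i.i.d.\ points is again $n$ i.i.d.\ points, exchangeability gives $\phi(n)=\tfrac{1}{n+1}E\big[\sum_j f_{d-1}(\mathrm{conv}(P\setminus X_j))\big]$. Deleting an interior point leaves the hull unchanged, while deleting a vertex $v$ destroys the $\deg(v)$ facets through $v$ and creates $r_v$ new ones, so $f_{d-1}(P\setminus v)=f_{d-1}(P)-\deg(v)+r_v$. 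Summing over vertices and using $\sum_v\deg(v)=d\,f_{d-1}(P)$ (simpliciality) yields
\[(n+1)\big(\phi(n+1)-\phi(n)\big)=d\,\phi(n+1)-E\Big[\sum_v r_v\Big].\]
A facet created when $v$ is deleted is a $d$-tuple $S\subset P\setminus v$ whose affine hull separates $v$ from the remaining $n-d$ points; therefore $\sum_v r_v$ equals the number $e_1$ of \emph{one-point facets} (hyperplanes through $d$ sample points having exactly one point strictly on one side). Thus monotonicity of $\phi$ at $n$ is \emph{equivalent} to $E[e_1(K_{n+1})]<d\,\phi(n+1)$, and the whole problem becomes a comparison of two expectations.

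To compare them I would use the integral representations against the law $\mu$ of the volume fraction $\alpha$ cut off $K$ by the hyperplane through $d$ random points. By the symmetry $\alpha\leftrightarrow 1-\alpha$, setting $g(m)=E[\alpha^m+(1-\alpha)^m]=2\int_0^1 x^m\,d\mu(x)$ one has $\phi(n+1)=\binom{n+1}{d}g(n+1-d)$ and, with $h(m)=g(m)-g(m+1)=2\int_0^1 x^m(1-x)\,d\mu(x)\ge 0$, also $E[e_1(K_{n+1})]=\binom{n+1}{d}(n+1-d)\,h(n-d)$. From $\phi(n)\sim An^c$ and $\binom{n}{d}\sim n^d/d!$ we get $g(m)\sim C\,m^{-\delta}$ with $\delta=d-c$. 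Karamata's Tauberian theorem (applicable since $\mu\ge0$) then yields $\mu([1-s,1])\sim c_\delta s^{\delta}$ as $s\to0$, and feeding this into the nonnegative integral for $h$ by a straightforward Abelian computation (substitute $x=1-s$, $x^m\approx e^{-ms}$) gives $h(m)\sim C\delta\,m^{-\delta-1}$. Consequently
\[\frac{E[e_1(K_{n+1})]}{\phi(n+1)}=(n+1-d)\,\frac{h(n-d)}{g(n+1-d)}\ \longrightarrow\ \delta=d-c\ <\ d,\]
because $c>0$. Hence $d\,\phi(n+1)-E[e_1(K_{n+1})]=\phi(n+1)\big(d-E[e_1]/\phi(n+1)\big)>0$ for all $n\ge n_K$, which is exactly the desired increment inequality.

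The main obstacle is the middle step, not the bookkeeping: the bare relation $\phi(n)\sim An^c$ permits $o(n^c)$ oscillations that could in principle swamp the $\sim c\,n^{c-1}$ increment we are trying to sign, so one cannot simply \lq\lq differentiate the asymptotics.\rq\rq\ The resolution is that $\phi(n+1)$ and $E[e_1(K_{n+1})]$ are integrals against the \emph{same} measure $\mu$; applying Karamata's Tauberian theorem once to recover the endpoint behaviour of $\mu$, and then a safe Abelian estimate to the already-nonnegative kernel defining $h$, pins the \emph{ratio} $E[e_1]/\phi(n+1)$ to the universal value $d-c$ with no further regularity hypothesis. The threshold $n_K$ is precisely where the error terms make this ratio fall below $d$, and because $c=\tfrac{d-1}{d+1}>0$ for the smooth bodies of the statement, such an $n_K$ exists.
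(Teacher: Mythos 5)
Your proof is correct, and while it starts from the same place as the paper, the heart of the argument is genuinely different. Both arguments first reduce $i=d-2$ to $i=d-1$ via simpliciality, and both use the same point-deletion identity: your resampling identity $(n+1)(\phi(n+1)-\phi(n))=d\,\phi(n+1)-E[s_1]$ is exactly the paper's Inequality~\eqref{eq:C1} (the paper keeps it as an inequality, $\sum_v\deg(v)\ge d f_{d-1}$, to cover degeneracies, but under the genericity assumption it is an equality as you state). The divergence is in how one then certifies $s_1(n)<d\,s_0(n)$ for large $n$. The paper stays combinatorial: it uses Clarkson's random sampling bound \eqref{eq:C2}, comparing $s_0(r)$ for a subsample of size $r\approx n/(1+\epsilon_d)$ against $s_0(n)$ and $s_1(n)$, and extracts the bound $s_1(n)\le(d-c/4)s_0(n)$ from the hypothesis $s_0(n)\sim An^c$ alone. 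You instead write both $s_0$ and $s_1$ as moments $\binom{n}{d}g(n-d)$ and $\binom{n}{d}(n-d)h(n-d-1)$ of one fixed cap-volume measure $\mu$, apply Karamata's Tauberian theorem to pin down $\mu$ near the endpoint, and then an Abelian estimate to get the exact limit $s_1(n)/s_0(n)\to d-c$. Your route buys a sharper conclusion (the precise limiting ratio $d-c$, versus the paper's eventual bound $d-c/4$) and makes transparent why the mere first-order asymptotic $\phi(n)\sim An^c$ suffices despite possible lower-order oscillations; the paper's route buys elementarity (no Tauberian machinery) and immediate generality, since \eqref{eq:C2} holds for any distribution satisfying the genericity assumption, which is how the paper states Theorem~\ref{thm:clarkson}. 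One small caveat if you wanted to match that generality: your integral representation uses volume fractions, i.e.\ the uniform law on $K$; for an arbitrary generic $\pr$ you would need the affine hull of $d$ sample points to carry no $\pr$-mass almost surely so that the two half-space masses sum to $1$. For Theorem~\ref{th:smooth} itself this is automatic, so your proof is complete as written.
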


Our result is in fact more general and
applies to convex hulls of points i.i.d. from any \lq\lq sufficiently
generic\rq\rq\  distribution (see Section~\ref{sec:clarkson}) and follows
from a simple and elegant random sampling technique introduced by
Clarkson~\cite{Clarkson} to analyze \emph{non-random} geometric
structures in discrete and computational geometry.

\section{Monotonicity for convex domains in the plane}
Assume $K$ is a planar convex body of volume one. 
Given a unit vector $u \in \mathbb{S}^1$, each halfspace 
$\{ x \in \R^2:\ x \cdot u \leq p \}$ cuts off from $K$ a set of volume $s\in [0,1]$. On the contrary, given $u \in \mathbb{S}^1, s \in (0,1)$ there is a unique line $\{ x \in \R^2:\ x \cdot u = p \}$ and a corresponding halfspace $\{ x \in \R^2:\ x \cdot u \leq p \}$ which cuts of from $K$ a set of volume precisely $s$. Denote by $L(s,u)$ the square of the length of this unique chord.

Using this notation, Buchta and Reitzner \cite{Buchta-Reitzner} showed that 
the expectation of the number of points on the convex hull of
$n$ points chosen randomly uniformly in $K$ can be computed with the following formula.

\begin{lemma}[Buchta,Reitzner]
$$ E[f_0(K_n)]  =
\frac16  n(n-1)
         \int_{\mathbb{S}^1} \int_{0}^1  s^{n-2}
                                         L(s,u) \, ds du  $$
\end{lemma}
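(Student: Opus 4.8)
The plan is to count \emph{edges} rather than vertices. Almost surely the $n$ sample points are in general position, so $K_n$ is a simple convex polygon and Euler's relation gives $f_0(K_n)=f_1(K_n)$; it therefore suffices to compute $E[f_1(K_n)]$. An unordered pair $\{x,y\}$ of sample points spans an edge of $K_n$ exactly when the remaining $n-2$ points all lie strictly on one side of the line through $x$ and $y$. If that line cuts $K$ into two pieces of volumes $s$ and $1-s$, then, by independence and uniformity and since $\vol(K)=1$, the two one-sided events have probabilities $s^{n-2}$ and $(1-s)^{n-2}$ respectively. Summing over the $\binom{n}{2}$ pairs and using linearity of expectation,
$$E[f_1(K_n)] = \binom{n}{2}\int_{K}\int_{K}\bigl(s(x,y)^{n-2}+(1-s(x,y))^{n-2}\bigr)\,\d x\,\d y,$$
where $s(x,y)$ is the volume cut off on one side of the line through $x$ and $y$.

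Next I would pass from the pair $(x,y)\in K\times K$ to line coordinates via an affine Blaschke--Petkantschin formula. Writing a line as $\{z:z\cdot u=p\}$ with unit normal $u=(\cos\phi,\sin\phi)$ and signed distance $p$, and placing the two points at arclength parameters $t_1,t_2$ along the line, a direct Jacobian computation of the map $(\phi,p,t_1,t_2)\mapsto(x,y)$ gives $\d x\,\d y=|t_1-t_2|\,\d t_1\,\d t_2\,\d p\,\d\phi$. Each ordered pair has two such representations, from $\phi$ and $\phi+\pi$, so integrating $\phi$ over $[0,2\pi)$, i.e. $u$ over $\mathbb{S}^1$, double-counts and introduces a factor $\tfrac12$. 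For fixed $(u,p)$ the parameters $t_1,t_2$ range over the chord $K\cap\{z\cdot u=p\}$, whose length I denote $\ell(s,u)$, and the elementary identity $\int_a^b\int_a^b|t_1-t_2|\,\d t_1\,\d t_2=\ell^3/3$ collapses the $t$-integration to the factor $\ell(s,u)^3/3$.

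It then remains to substitute $p\mapsto s$. Since moving the line changes the cut-off volume at a rate equal to the chord length, $\d s/\d p=\ell(s,u)$, so $\tfrac13\ell^3\,\d p=\tfrac13\ell^2\,\d s=\tfrac13 L(s,u)\,\d s$, and $s$ sweeps $[0,1]$ as $p$ sweeps across $K$. This yields
$$E[f_1(K_n)]=\frac{\binom{n}{2}}{2}\int_{\mathbb{S}^1}\int_0^1\bigl(s^{n-2}+(1-s)^{n-2}\bigr)\frac{L(s,u)}{3}\,\d s\,\d u.$$
Finally I would symmetrize: substituting $s\mapsto 1-s$ and using $L(1-s,u)=L(s,-u)$, which holds because the same chord cuts off complementary volumes for opposite normals, the $(1-s)^{n-2}$ term reproduces the $s^{n-2}$ term after relabelling $u\mapsto -u$ on $\mathbb{S}^1$. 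The two terms thus coincide, doubling the integral and cancelling the prefactor $\tfrac12$; together with $\binom{n}{2}=n(n-1)/2$ and the remaining factor $\tfrac13$ this produces the constant $\tfrac16 n(n-1)$, so that
$$E[f_0(K_n)]=E[f_1(K_n)]=\frac16 n(n-1)\int_{\mathbb{S}^1}\int_0^1 s^{n-2}L(s,u)\,\d s\,\d u.$$

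The main obstacle is the Blaschke--Petkantschin step together with the combinatorial bookkeeping: the factor $\tfrac12$ from the double cover of the space of lines, the substitution $p\mapsto s$ with its chord-length Jacobian, and the symmetry $L(1-s,u)=L(s,-u)$ must all be tracked at once so that the constants combine cleanly into $\tfrac16 n(n-1)$. By contrast, the probabilistic input — the one-sided probability $s^{n-2}+(1-s)^{n-2}$ — and the reduction $f_0=f_1$ for a simple convex polygon are routine.
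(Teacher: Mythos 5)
The paper does not actually prove this lemma; it is imported verbatim from Buchta and Reitzner \cite{Buchta-Reitzner}, and your derivation is correct and is essentially the classical R\'enyi--Sulanke-type argument underlying that formula. Counting edges via the one-sided probability $s^{n-2}+(1-s)^{n-2}$, the planar affine Blaschke--Petkantschin formula with Jacobian $|t_1-t_2|$ and the factor $\tfrac12$ for the double cover of the space of lines, the substitution $\d s=\ell\,\d p$ turning $\ell^3/3$ into $L(s,u)/3$, and the symmetry $L(1-s,u)=L(s,-u)$ all check out, and the constants combine to $\tfrac16 n(n-1)$ as claimed (for $n\ge 3$, where $f_0=f_1$ holds almost surely).
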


Here $du$ denotes integration with respect to Lebesgue measure on
$\mathbb{S}^1$.  By a change of variables we obtain

$$ E[f_0(K_n)]  =
\frac16  (n-1)
\int_{\mathbb{S}^1} \int_{0}^1  t^{-\frac{1}n } L(t^{\frac 1n},u) \, dt du  
= 
\frac16 \int_{\mathbb{S}^1} \cI_n(u) du  
$$

\noindent
with $ \cI_n(u) = (n-1) \int_{0}^1 t^{-\frac{1}n } L(t^{\frac 1n},u)
\, dt .$ Observe that $L(1,u)=0$ for almost all $u \in \mathbb{S}^1$.
Also observe that the partial derivative $\frac{\partial }{\partial s}
L(s,u)$ exists for almost all $(s,u)$. This is a consequence of the
a.e. differentiability of a convex function. In the following we write
$L(\cdot)=L(\cdot, u)$, ${\textstyle \frac{\partial}{\partial s}}
L(\cdot,u) = L'(\cdot)$.  Integration by parts in $t$ gives

\begin{equation}\label{eq:cI} \cI_n (u) =
- \int_{0}^1  L'(t^{\frac 1n}) \, dt .
\end{equation}

\noindent
Finally, the convexity of $K$ induces the following lemma.
\begin{lemma}
\label{lemma:convex} Given a value $u$, the derivative 
$s\mapsto L'(s)$ is a decreasing function.
\end{lemma}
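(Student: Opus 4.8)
The plan is to reparametrize the cutting line by its \emph{position} rather than by the volume $s$ it cuts off, compute $L'$ in this parametrization, and then exploit the concavity of the chord length. Fix $u$ and choose coordinates so that $u$ is the positive direction of the first axis; the cutting lines are then the vertical lines $\{x_1 = p\}$, and $K$ projects onto an interval $[a,b]$ of the first axis. For $p \in (a,b)$ the line $\{x_1 = p\}$ meets $K$ in a segment with endpoints $(p, \phi_-(p))$ and $(p, \phi_+(p))$, where $\phi_+$ describes the upper part of $\partial K$ and $\phi_-$ the lower part. I would write $\lambda(p) = \phi_+(p) - \phi_-(p)$ for the length of this chord, so that $L = \lambda^2$ once $p$ is expressed as a function of $s$.

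The key geometric input is that $\lambda$ is a concave function of $p$. Convexity of $K$ makes $\phi_+$ concave and $\phi_-$ convex, hence $-\phi_-$ concave; being the sum of two concave functions, $\lambda = \phi_+ + (-\phi_-)$ is concave, so its derivative $\lambda'$ is decreasing in $p$ wherever it exists. Next I would relate the two parametrizations: moving the line from $p$ to $p+dp$ increases the cut-off volume at rate equal to the chord length, i.e. $\frac{ds}{dp} = \lambda(p) > 0$, so $p$ is a strictly increasing function $p(s)$ with $\frac{dp}{ds} = 1/\lambda(p)$.

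Writing $L(s) = \lambda(p(s))^2$ and differentiating by the chain rule, the factor $\lambda(p)$ cancels and one obtains the clean identity
$$ L'(s) = 2\,\lambda(p(s))\,\lambda'(p(s))\,\frac{dp}{ds} = 2\,\lambda'(p(s)). $$
This cancellation is precisely why it is convenient that $L$ is the \emph{square} of the chord length. Since $\lambda'$ is decreasing in $p$ and $p(s)$ is increasing in $s$, the composition $s \mapsto L'(s) = 2\lambda'(p(s))$ is decreasing, which is the claim.

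The main obstacle is not the idea but the regularity bookkeeping: the boundary functions $\phi_\pm$, and hence $\lambda$, are only differentiable almost everywhere, so $L'$ and $\lambda'$ exist only a.e. (as already noted in the text). I would therefore establish concavity of $\lambda$ directly from the defining chord inequality $\lambda\big(\tfrac{p_1+p_2}{2}\big) \geq \tfrac12\big(\lambda(p_1)+\lambda(p_2)\big)$, which follows by taking midpoints of the two chords and using convexity of $K$; this way "$\lambda'$ decreasing" is read in the sense appropriate to a concave function, and the chain-rule computation is justified at a.e. $s$ by the change of variables $s \leftrightarrow p$, which is bi-Lipschitz on compact subintervals where $\lambda$ is bounded away from $0$.
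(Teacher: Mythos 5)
Your proposal is correct and follows essentially the same route as the paper: both reparametrize by the position $p$ of the cutting line, use $\tfrac{ds}{dp}=l(p)$ to cancel the chord-length factor in the chain rule and obtain $L'(s)=2\,l'(p(s))$, and then invoke concavity of the chord length in $p$. The only difference is that you spell out why the chord length is concave (via the upper/lower boundary functions, or the midpoint chord inequality) and attend to the a.e.-differentiability bookkeeping, both of which the paper leaves implicit.
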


\begin{proof}
Fix $u \in \mathbb{S}^1$. We denote $l(p)=l(p,u)$ the length of the chord $K \cap \{ x \in \R^2:\ x \cdot u = p \}$.
Moreover,
$L(s(p))=l(p)^2$ where $s(p)$ is the volume of the part of $K$ on the
left of the chord of length $l(p)$. 
This volume $s(p)$ is a monotone and hence injective function of $p$ with inverse $p(s)$, and we have 
$\frac{d}{d p} s(p) = l(p)$.

First observe that because $K$ is a convex body, the chord length
$l(p)$ is concave as a function of $p$. 
Thus its derivative $\frac{d}{d p} l(p)$ is decreasing.

Hence
$$L'(s) = 2 l(p(s)) \frac{d}{ds} l(p(s))
= 2  \frac{d}{ds} l(p(s)) \  (\frac{d}{dp} s(p))\vert_{p=p(s)}
= 2 \frac{d}{dp} l(p)\vert_{p=p(s)}.$$
Since the derivative $\frac{d}{dp} l(p)$ is decreasing and $p(s)$ is monotone, we see that ${\textstyle \frac{d}{ds}} L(s,u)$ is decreasing in $s$.
\end{proof}

This allows us to conclude that the expectancy of the number of points
on the convex hull is increasing.

\begin{proof}[Proof of Theorem~\ref{theorem:monocity}]
  According to Lemma \ref{lemma:convex}, $L'(s)$ is decreasing. Since
  for all $t\in[0,1]$, $t^\frac1n < t^{\frac1{n+1}}$, this implies
  that $L'(t^\frac1{n}) \geq L'(t^\frac1{n+1})$. Combined with
  equation~(\ref{eq:cI}) we have

$$ \cI_n (u) \leq \cI_{n+1} (u) $$

\noindent
which proves $E[f_0(K_n)] \leq E[f_0(K_{n+1})]$. In the planar case,
the number of edges is also an increasing function because
$f_0(K_n)=f_1(K_n)$.
\end{proof}

\section{Random sampling}\label{sec:clarkson}

We denote by $\sharp S$ the cardinality of a finite set $S$ and we let
${\mathbb 1}_X$ denote the characteristic function of event $X$:
${\mathbb 1}_{p\in F}$ is 1 if $p\in F$ and $0$ otherwise. Let $S$ be
a finite set of points in $\R^d$ and let $k\ge 0$ be an integer. A
\emph{$k$-set of $S$} is a subset $\{p_1, p_2, \ldots p_{d}\}
\subseteq S$ that spans a hyperplane bounding an open halfspace that
contains exactly $k$ points from $S$; we say that the $k$-set
\emph{cuts off} these $k$ points. In particular, $0$-sets are facets
of the convex hull of $S$. For any finite subset $S$ of $D$ we let
$s_k(S)$ denote the number\footnote{If the hyperplane separates $S$ in
  two subsets of $k$ elements ($2k+d=n$) the $k$-set is counted
  twice.} of $k$-sets of $S$.

\paragraph{Generic distribution assumption.}
  Let $\pr$ denote a probability distribution on $\R^d$; we assume
  throughout this section that $\pr$ is such that $d$ points chosen
  independently from $\pr$ are generically affinely independent.

\bigskip

Determining the order of magnitude of the maximum number of $k$-sets
determined by a set of $n$ points in $\R^d$ has been an important open
problem in discrete and computational geometry over the last decades.
In the case d=2, Clarkson~\cite{Clarkson} gave an elegant proof that
$s_{\le k}(S)=O(nk)$ for any set $S$ of $n$ points in the plane,
where:

\[ s_{\le k}(S)=s_0(S)+s_1(S)+ \ldots + s_k(S).\]

\noindent
(See also Clarkson and Shor~\cite{Clarkson-Shor} and
Chazelle~\cite[Appendix A.2]{Chazelle}.) Although the statement holds
for any fixed point set, and not only in expectation, Clarkson's
argument is probabilistic and will be our main ingredient. It goes as
follows. Let $R$ be a subset of $S$ of size $r$, chosen randomly and
uniformly among all such subsets.  A $i$-set of $S$ becomes a $0$-set
in $R$ if $R$ contains the two points defining the $i$-set but none of
the $i$ points it cuts off. This happens approximately with
probability $p^2(1-p)^i$, where $p=\frac{r}n$ (see
\cite{Clarkson-Shor} for exact computations). It follows that:

\[ E[s_0(R)] \ge \sum_{0 \le i \le k} p^2(1-p)^is_i(S) \ge p^2(1-p)^k
s_{\le k}(S).\]

\noindent
Since $E[s_0(R)]$ cannot exceed $\sharp R = r$, we have $s_{\le k}(S)
\le \frac{n}{p(1-p)^k}$ which, for $p=1/k$, yields the announced bound
$s_{\le k}(S)=O(nk)$.  A similar random sampling argument yields the
following inequalities.

\begin{lemma}\label{lem:clarkson}
  Let $s_k(n)$ denote the expected number of $k$-sets in a set of $n$
  points chosen independently from $\pr$. We have

  \begin{equation}
    s_0(n) \ge s_0(n-1)+\frac{d\ s_0(n)-s_1(n)}n\label{eq:C1}
  \end{equation}

\noindent
  and for any integer $1 \le r \le n$ 

  \begin{equation}
    s_0(r) \ge \frac{\binom{n-d}{r-d}}{\binom{n}{r}} s_0(n)
  + \frac{\binom{n-d-1}{r-d}}{\binom{n}{r}} s_1(n).\label{eq:C2}
  \end{equation}
\end{lemma}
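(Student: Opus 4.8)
The plan is to prove both inequalities at once by establishing a single \emph{exact} identity via Clarkson's random-sampling argument, and then to read off \eqref{eq:C1} and \eqref{eq:C2} as elementary consequences. Fix a set $S$ of $n$ points in general position and draw a subset $R \subseteq S$ of size $r$ uniformly at random among all $r$-subsets. The key observation is that a $k$-set of $S$ survives as a $0$-set of $R$ precisely when $R$ contains the $d$ points spanning its hyperplane but none of the $k$ points it cuts off. The remaining $r-d$ elements of $R$ must then be drawn from the $n-d-k$ points lying neither on the hyperplane nor in the cut-off halfspace, so this event has probability $\binom{n-d-k}{r-d}\big/\binom{n}{r}$. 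This precise ratio replaces the heuristic $p^2(1-p)^i$ used in the overview above.

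First I would verify that the map sending each surviving $k$-set to its induced $0$-set of $R$ is a bijection onto the $0$-sets of $R$: every $0$-set of $R$ is spanned by $d$ points of $S$ whose empty (in $R$) open halfspace contains some number $k \ge 0$ of points of $S$, and hence is the image of exactly one $k$-set of $S$. Summing survival probabilities over all $k$-sets gives, for a fixed $S$,
\[ E_R[s_0(R)] = \sum_{k \ge 0} s_k(S)\,\frac{\binom{n-d-k}{r-d}}{\binom{n}{r}}. \]
Taking expectations over the $n$ i.i.d.\ points and invoking exchangeability — the $r$ points of a uniform random $r$-subset of $n$ i.i.d.\ samples are themselves $r$ i.i.d.\ samples from $\pr$, so that $E[s_0(R)] = s_0(r)$ — yields the exact identity
\[ s_0(r) = \sum_{k \ge 0} s_k(n)\,\frac{\binom{n-d-k}{r-d}}{\binom{n}{r}}. \]

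Both stated inequalities now follow. Since every summand is nonnegative, discarding all terms with $k \ge 2$ and keeping only $k=0$ and $k=1$ gives \eqref{eq:C2} immediately. For \eqref{eq:C1} I would specialize the identity to $r = n-1$: then $\binom{n}{r} = n$, while $\binom{n-d-k}{\,n-1-d}$ vanishes for every $k \ge 2$, so only the $k=0$ term (with $\binom{n-d}{\,n-1-d} = n-d$) and the $k=1$ term (with $\binom{n-d-1}{\,n-1-d} = 1$) survive, leaving $n\,s_0(n-1) = (n-d)\,s_0(n) + s_1(n)$. Rearranging this reproduces \eqref{eq:C1} — in fact as an equality.

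The algebra is routine; the step deserving care is the probabilistic bookkeeping. The main obstacle is to justify cleanly that the two layers of randomness — the i.i.d.\ points and the uniform subset — combine so that $E[s_0(R)] = s_0(r)$, and to confirm that the survival count is \emph{exact} rather than merely asymptotic. The generic-distribution assumption is what makes this bookkeeping valid almost surely: it guarantees that $d$ points span a genuine hyperplane and that no further point lies on it, so that $k$-sets and $0$-sets are well defined with probability one and the summation above carries no hidden degenerate contributions.
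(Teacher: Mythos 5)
Your argument is correct, and for Inequality~\eqref{eq:C2} it is essentially the paper's own proof: the same Clarkson-style sampling with the exact hypergeometric probability $\binom{n-d-k}{r-d}/\binom{n}{r}$, followed by exchangeability to pass from $E[s_0(R)]$ to $s_0(r)$; the paper simply drops the terms $k\ge 2$ without first writing out the full identity. Where you genuinely diverge is Inequality~\eqref{eq:C1}: the paper proves it by a separate leave-one-out count on a fixed $S$, comparing $s_0(S)$ with $s_0(S\setminus\{q\})$ and bounding the facet--vertex incidences from below by $d\,s_0(S)$ (an inequality chosen so that the bookkeeping also survives degenerate configurations, where a facet may have more than $d$ vertices), whereas you obtain \eqref{eq:C1} as the specialization $r=n-1$ of your exact identity, and in fact as an \emph{equality} $n\,s_0(n-1)=(n-d)\,s_0(n)+s_1(n)$. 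That sharper form is valid here because the section's genericity assumption makes non-simplicial facets and other degeneracies events of probability zero, so $0$-sets and facets coincide almost surely and no incidence slack is lost; your unified identity is arguably cleaner and buys a stronger conclusion, at the price of leaning entirely on the almost-sure general position (and, in edge cases such as $R$ lying on a single hyperplane, on the footnote's convention that a $d$-subset with both open halfspaces empty is counted twice). One cosmetic point: for $1\le r<d$ the right-hand side of \eqref{eq:C2} vanishes by the usual convention on binomial coefficients, so the claim is trivially true there; it is worth saying so since the lemma allows such $r$.
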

 
\begin{proof}
  Let $S$ be a set of $n$ points chosen, independently, from $\pr$ and
  let $q \in S$. The $0$-sets of $S$ that are not $0$-sets of $\Sp$
  are precisely those defined by $q$. Conversely, the $0$-sets of
  $\Sp$ that are not $0$-sets of $S$ are precisely those $1$-sets of
  $S$ that cut off $q$. We can thus write:

  \[ s_0(S) = s_0(\Sp) + \sum_{F \hbox{ facet of } CH(S)} \
  {\mathbb{1}}_{q \in F} - \sharp
  1\mbox{-sets cutting off } q.\]

\noindent
  Note that the equality remains true in the degenerate cases where several
  points of $S$ are identical or in a non generic position if we
  count the facets of the convex hull of $S$ with multiplicities in the
  sum and in $s_0(n)$.
%
  Summing the previous identity over all points $q$ of $S$, we obtain

  \[ns_0(S) = \pth{\sum_{q \in S}s_0(\Sp)} 
    + \pth{\sum_{F \hbox{ facet of } CH(S)} \sum_{q \in S} {\mathbb{1}}_{q \in F}}
     - s_1(S),\]

\noindent
  and since a face of $CH(S)$ has at least $d$ vertices,

  \[ns_0(S) \geq \pth{\sum_{q \in S}s_0(\Sp)} + d s_0(S) - s_1(S).\]

\noindent
  This inequality is actually an equality if $d=2$ but not if $d \ge
  3$.
  Taking~$n$ points chosen randomly and independently from $\pr$, then
  deleting one of these points, chosen randomly with equiprobability,
  is the same as taking $n-1$ chosen randomly and independently from
  $\pr$. We can thus average over all choices of $S$ and obtain

  \[ns_0(n) \geq ns_0(n-1)+d\ s_0(n)-s_1(n),\]

\noindent
  which implies Inequality~\eqref{eq:C1}.
  

  Now, let $r \le n$ and let $R$ be a random subset of $S$ of size
  $r$, chosen uniformly among all such subsets. For $k \le
  \frac{r}2 $, a $k$-set of $S$ is a $0$-set of $R$ if $R$
  contains that $k$-set and none of the $k$ points it cuts off. For
  each fixed $k$-set $A$ of $S$, there are therefore
  $\binom{n-d-k}{r-d}$ distinct choices of $R$ in which $A$ is a
  $0$-set. Counting the expected number of $0$-sets and $1$-sets of
  $S$ that remain/become a $0$-set in $R$, and ignoring those $0$-sets
  of $R$ that were $k$-sets of $S$ for $k \ge 2$, we obtain:

  \[ E[s_0(R)] \ge \frac{\binom{n-d}{r-d}}{\binom{n}{r}} s_0(S)
  + \frac{\binom{n-d-1}{r-d}}{\binom{n}{r}} s_1(S).\]

\noindent
  Recall that the expectation is taken over all choices of a subset
  $R$ of $r$ elements of the fixed point set $S$. We can now average
  over all choices of a set $S$ of $n$ points taken, independently,
  from $\pr$, and obtain Inequality~\eqref{eq:C2}.
\end{proof}


Letting $p=\frac{r-d}{n-d}$, Inequality~\eqref{eq:C2} yields the,
perhaps more intuitive, inequality:

\[ s_0\pth{r} \ge p^d s_0(n) + p^d(1-p)s_1(n).\]

\noindent
 We can now prove our main result.

\begin{theorem}\label{thm:clarkson}
  Let $Z_n$ denote the convex hull of $n$ points chosen independently
  from $\pr$. If $E[f_{d-1}(Z_n)] \approx A n^{c}$ for some $A$, $c>0$
  then there exists an integer $n_0$ such that for any $n \ge n_0$ we
  have $E[f_{d-1}(Z_{n+1})] > E[f_{d-1}(Z_n)]$.
\end{theorem}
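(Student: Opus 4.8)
The plan is to leverage the two inequalities in Lemma~\ref{lem:clarkson}, since $f_{d-1}(Z_n) = s_0(n)$ is exactly the expected number of facets (0-sets). The asymptotic hypothesis $E[f_{d-1}(Z_n)] \approx A n^c$ tells us the growth rate of $s_0(n)$, and the key difficulty is that Inequality~\eqref{eq:C1}, namely $s_0(n) \ge s_0(n-1) + \frac{d\,s_0(n) - s_1(n)}{n}$, only gives monotonicity if we can show $d\,s_0(n) - s_1(n) > 0$ — that is, that the expected number of 1-sets does not exceed $d$ times the expected number of 0-sets (facets). So the heart of the proof is to bound $s_1(n)$ from above in terms of $s_0(n)$ using the asymptotic information.

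Let me think about how to extract such a bound on $s_1(n)$ from the asymptotic hypothesis.

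To get an upper bound on $s_1(n)$, I would use Inequality~\eqref{eq:C2} in its rescaled form $s_0(r) \ge p^d s_0(n) + p^d(1-p) s_1(n)$ with $p = \frac{r-d}{n-d}$. Since $p^d(1-p)s_1(n)$ is nonnegative, this rearranges to an upper bound on $s_1(n)$ whenever $p < 1$:
\begin{equation*}
s_1(n) \le \frac{s_0(r) - p^d s_0(n)}{p^d(1-p)}.
\end{equation*}
Now I would substitute the asymptotics $s_0(m) \approx A m^c$. With $r$ chosen so that $p = r/n$ (up to the harmless $d$-shift, negligible for large $n$), we have $s_0(r) \approx A (pn)^c = p^c s_0(n)$, so the bracket becomes approximately $s_0(n)(p^c - p^d)$, giving $s_1(n) \lesssim s_0(n) \cdot \frac{p^c - p^d}{p^d(1-p)}$. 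The idea is then to pick the value of $p \in (0,1)$ that minimizes $g(p) := \frac{p^c - p^d}{p^d(1-p)} = \frac{p^{c-d} - 1}{1-p}$ and check that this minimum is strictly less than $d$. Taking the limit $p \to 1$ and applying l'Hôpital's rule to $\frac{p^{c-d}-1}{1-p}$ gives $\lim_{p\to1} g(p) = d - c < d$, so for $p$ close enough to $1$ the bound yields $s_1(n) < d\,s_0(n)$ in the limit, hence $d\,s_0(n) - s_1(n) > 0$ for all sufficiently large $n$. Feeding this back into Inequality~\eqref{eq:C1} gives $s_0(n) > s_0(n-1)$, i.e. $E[f_{d-1}(Z_{n+1})] > E[f_{d-1}(Z_n)]$, which is the claim.

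The main obstacle I expect is making the limiting argument rigorous rather than heuristic: the asymptotic hypothesis $E[f_{d-1}(Z_n)] \approx A n^c$ must be read as $s_0(m)/(A m^c) \to 1$, and I am applying it simultaneously at $m = r$ and $m = n$ with $r$ and $n$ both tending to infinity while $p = r/n$ stays fixed (or tends to a fixed value). One must be careful that the error terms in $s_0(r) = A r^c(1 + o(1))$ and $s_0(n) = A n^c(1+o(1))$ are controlled uniformly enough that the strict inequality $g(p) < d$, which holds with a fixed margin $c > 0$ for $p$ near $1$, survives the passage from the exact binomial coefficients in~\eqref{eq:C2} to their asymptotic forms. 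Concretely, I would fix one value $p_0$ close to $1$ for which $g(p_0) \le d - c/2$, then use the binomial-coefficient version of~\eqref{eq:C2} with $r = \lfloor p_0 n \rfloor$ and take $n \to \infty$, verifying that $\frac{\binom{n-d}{r-d}}{\binom{n}{r}} \to p_0^d$ and $\frac{\binom{n-d-1}{r-d}}{\binom{n}{r}} \to p_0^d(1-p_0)$ so that the asymptotic estimate for $s_1(n)$ holds with enough slack to conclude $d\,s_0(n) - s_1(n) > 0$ for large $n$.
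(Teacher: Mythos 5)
Your proposal is correct and follows essentially the same route as the paper: reduce via Inequality~\eqref{eq:C1} to showing $s_1(n) < d\,s_0(n)$, then use Inequality~\eqref{eq:C2} (in the $p$-form the paper itself derives) together with the asymptotic $s_0(m)\approx Am^c$ to bound $s_1(n)/s_0(n)$ by a quantity tending to $d-c<d$ as $p\to 1$, with the error terms absorbed for $n$ large. The only differences are cosmetic (you work with $p=1/q$ and fix $r=\lfloor p_0 n\rfloor$ rather than locating an integer $r$ in a range of admissible $q$).
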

\begin{proof}[Proof of Theorem~\ref{thm:clarkson}]
  Let $\pr$ be a probability distribution on $\R^d$ and let $s_k(n)$
  denote the expected number of $k$-sets in a set of $n$ points chosen
  independently from $\pr$. Recall that  $s_0$ counts the expected
  number of facets in the convex hull of $n$ points chosen
  independently and uniformly from $\pr$. By Inequality~\eqref{eq:C1},
  for $s_0$ to be increasing it suffices that $s_1(n)$ be bounded from
  above by $d s_0(n)$.

  Let $\psi(r,n)=\frac{s_0(r)}{s_0(n)}$. Substituting into
  Inequality~\eqref{eq:C2},

  \[ \psi(r,n) s_0(n) \ge \frac{\binom{n-d}{r-d}}{\binom{n}{r}} s_0(n)
  + \frac{\binom{n-d-1}{r-d}}{\binom{n}{r}} s_1(n),\]

\noindent
  which rewrites as

  \begin{equation}
  \label{eq:s1s0}
    s_1(n) \le \frac{\binom{n-d}{r-d}}{\binom{n-d-1}{r-d}}
    \pth{\psi(r,n)
      \frac{\binom{n}{r}}{\binom{n-d}{r-d}} - 1}s_0(n)
  \end{equation}

  We let $q=\frac{n-d}{r-d}$. Developing the
  binomial expressions:

  \begin{align*}
    \frac{\binom{n-d}{r-d}}{\binom{n-d-1}{r-d}} &= \frac{n-d}{n-r}
                                                 = \frac{q}{q-1} &
    \text{and}&&
    \frac{\binom{n}{r}}{\binom{n-d}{r-d}} &=
          \frac{n}{r}\cdot\frac{n-1}{r-1}\ldots\frac{n-d+1}{r-d+1}\\
  \end{align*}
  
  And for $0 \leq k < d$ we have $\frac{n-k}{r-k}<\frac{n-d}{r-d}=q$.
  Thus:

  $$s_1(n) \leq q \frac{\psi(r,n)\frac{n}{r}q^{d-1}-1}{q-1} s_0(n)$$

  Assume now that we know a function $g$ such that $s_0(n)\approx
  g(n)$.  Then for any $\frac{1}{2}>\epsilon>0$, there is
  $N_\epsilon\in\mathbb{N}$ such that for all $n>r>N_\epsilon$ we
  have:

  $$\psi(r,n)=\frac{s_0(r)}{s_0(n)}
  <\left(\frac{1+\epsilon}{1-\epsilon}\right)\frac{g(r)}{g(n)}
  <(1+4\epsilon)\frac{g(r)}{g(n)}$$
  
  \noindent
  which gives:

  \begin{equation}
    s_1(n) \leq q \frac{\frac{g(r)/r}{g(n)/n}q^{d-1}-1}{q-1} s_0(n)
    + 4\epsilon\left(\frac{g(r)/r}{g(n)/n}\right)
    \frac{q^{d}}{q-1} s_0(n).
    \label{eq:general}
  \end{equation}

  \noindent
  In the case where $s_0(n)\approx A n^c$, we have:

  $$ \frac{g(r)/r}{g(n)/n} = \left(\frac{n}{r}\right)^{1-c} < q^{1-c} $$

  \noindent
  And plugging back in Equation \eqref{eq:general}, we get:
                        
  \begin{align}
       s_1(n) &\leq q \frac{q^{d-c}-1}{q-1} s_0(n)
                + 4\epsilon \frac{q^{d+1}}{q-1} s_0(n)
    \label{eq:twoterms}
  \end{align}

  The expression $q \frac{q^{d-c}-1}{q-1}$ converges toward $d-c$ when
  $q$ approaches $1$. Thus, there exists $\epsilon_d$ such that for
  all $1<q<1+\epsilon_d$:

  $$q \frac{q^{d-c}-1}{q-1}<d-\frac{c}{2}$$

  \noindent
  The second term of Equation \eqref{eq:twoterms} is bounded for all
  $1+\frac{\epsilon_d}{2}<q< 1+\epsilon_d$ by:

  $$4\epsilon\frac{q^{d+1}}{q-1}s_0(n)
  <8\epsilon\frac{(1+\epsilon_d)^{d+1}}{\epsilon_d}s_0(n)$$

  Finally, let $\epsilon= \frac{c\epsilon_d}{32(1+\epsilon_d)^{d+1}}$.
  For all $r$ such that:

  \begin{align}
    N_\epsilon&<r<n &\text{and}&&
    1+\frac{\epsilon_d}{2}&<\frac{n-d}{r-d}<\epsilon_d
    \label{eq:Cr}
  \end{align}

  \noindent
  we have:

  $$s_1(n)<(d-\frac{c}{4})s_0(n)$$

  \noindent
  With Inequality~\eqref{eq:C1} this implies that $s_0(n)>s_0(n-1)$.

  It remains to check that for $n$ large enough, there always exists
  $r$ satisfying Condition \eqref{eq:Cr}. We can rewrite condition
  \eqref{eq:Cr} as:

  $$d+\frac{n-d}{1+\epsilon_d}<r<d+\frac{n-d}{1+\epsilon_d/2}$$

  \noindent
  In particular, there exists an integer $r$ satisfying this condition
  as soon as $\frac{n-d}{1+\epsilon_d/2}-\frac{n-d}{1+\epsilon_d}>1$.
  Thus, as soon as $n>max(N_\epsilon,
  d+\frac{1}{\frac{1}{1+\epsilon_d/2}-\frac{1}{1+\epsilon_d}})$,
  Condition \eqref{eq:Cr} is satisfied, which concludes the proof.
\end{proof}

Now, from equations~(\ref{eq:polytope}) and~(\ref{eq:smooth}) we can see
that Theorem~\ref{thm:clarkson} holds for random polytopes $K_n$ when
$K$ is smooth, but not when $K$ is a polytope. This proves that for smooth $K$ the expectation $E[f_{d-1}(K_n)]$ is asymptotically
 increasing, i.e. the first part of Theorem~\ref{th:smooth}.
  
The genericity assumption on $\pr$ implies that the convex hull $Z_n$
of $n$ points chosen independently from $\pr$ is almost surely
simplicial. Thus, in $Z_n$ any $(d-1)$-face is almost surely incident
to exactly $d$ faces of dimension $d-2$ and
$f_{d-2}(Z_n)=\frac{d}{2}f_{d-1}(Z_n)$. Theorem~\ref{thm:clarkson} 
therefore implies that $f_{d-2}(Z_n)$ is asymptotically increasing, i.e. the second part of Theorem~\ref{th:smooth}..
For $d=3$, with Euler's relation this further implies that $f_0(Z_n)$
is asymptotically increasing.

\section*{Acknowledgment.} 

This work was initiated during the $10^{th}$ McGill - INRIA Workshop
on Computational Geometry at the Bellairs institute. The authors wish
to thank all the participants for creating a pleasant and stimulating
atmosphere, in particular Sariel Har-Peled and Raimund Seidel for
useful discussions at the early stage of this work.  The authors also
thank Imre B\'ar\'any and Pierre Calka for helpful discussions.

\end{document}